\newcommand{\<}{\langle}
\renewcommand{\>}{\rangle}
\newcommand{\agents}{{\cal A}}
\newcommand{\bigO}[1]{{\cal O}{#1}}
\newcommand{\likes}[2]{l(j,i)}
\newcommand{\lstrat}[1]{\lambda_{#1}}
\newcommand{\strat}{s}
\newcommand{\strats}[1]{strat(#1)}
\newcommand{\vote}[2]{V_{#1}(#2)}
\newcommand{\votep}[1]{P(#1)}
\newcommand{\kvote}[3]{V_{#1}(#2,#3)}
\newcommand{\kvotep}[2]{P(#1,#2)}
\newcommand{\Nat}{{\mathbb N}}
\newcommand{\acro}[1]{\textsc{#1}}
\newcommand{\act}{\mathbb{A}}
\newcommand{\catld}[1]{\langle\!\langle #1\rangle\!\rangle}
\renewcommand{\iff}{\ \text{iff}\ }
\newcommand\ie{i.e{.\ }}
\newcommand\eg{e.g{.\ }}
\newcommand{\roles}{\mathcal{R}}
\newcommand{\exptime}{$\mathsf{EXPTIME}$}
\newcommand{\np}{$\mathsf{NP}$}
\title{No big deal: introducing roles to reduce the size of ATL
  models\thanks{Accepted for presentation at LAMAS 2012
    workshop on June 5, 2012 in Valencia, Spain.}}
\author{Sjur Dyrkolbotn\inst{1} \and Piotr Kaźmierczak\inst{2,3}\thanks{Piotr
    Kaźmierczak's research was supported by the Research Council of Norway project 194521 (FORMGRID).} \and
  Erik Parmann\inst{1} \and Truls Pedersen\inst{3}}
\institute{Department of Informatics,
  University of Bergen \and Department of Computing,
  Mathematics and Physics, Bergen University College \and Department
  of Information Science and Media Studies, University of Bergen
  \email{sjur.dyrkolbotn@uib.no, phk@hib.no,
    erik.parmann@uib.no, truls.pedersen@uib.no}}
\begin{document}
\maketitle

\begin{abstract}
  In the following paper we present a~new semantics for the well-known
  strategic logic \acro{atl}. It is based on adding \emph{roles} to
  concurrent game structures, that is at every state, each agent
  belongs to exactly one role, and the role specifies what actions are
  available to him at that state. We show advantages of the new
  semantics, analyze model checking complexity and prove equivalence
  between standard \acro{atl} semantics and our new approach.
\end{abstract}
 
\section{Introduction}

One of the most intensively studied
\cite{JamHoe04-0,KacPen0411-0,HoeWoo0505-0} areas of research in the
field of multi-agent systems are \emph{strategic} or
\emph{cooperation} logics -- formalisms that allow for reasoning about
agents' strategies and behavior in a~multi-agent setting. Two of the
most known logics are Marc Pauly's Coalition Logic (\acro{cl})
\cite{Pau01-0,Pau0202-0} and Alur, Henzinger and Kupferman's
Alternating-time Temporal Logic (\acro{atl})
\cite{alur2002alternating}, which can be considered a~temporal
extension of Coalition Logic. Both these logics gained much popularity
and generated a~`zoo' of derivatives
\cite{HoeWoo03-0,HoeJamWoo05-0,AgoHoeWoo08-0,AleLogNga11-0,AgoHoeWoo09-1}.

This popularity is in no small part due to relative high expressive
power of both \acro{cl} and \acro{atl}, but also due to low complexity
of model checking problems for these respective logics. Model checking
of Coalition Logic can be solved in polynomial time in the size of the
model and the length of the formula \cite{Pau01-0}. It remains
polynomial for \acro{atl} as well \cite{alur2002alternating}, which is
considered a~very good result. However, as investigated by Jamroga and
Dix \cite{JamDix05-0}, in both cases the number of agents must be
\emph{fixed}. If it is not then model checking of \acro{atl} models
represented as \emph{alternating transition systems} is \np-complete,
and if the models are represented as \emph{concurrent game structures}
(\acro{cgs}) it becomes $\mathsf{\Sigma^{P}_{2}}$-complete. Also, van
der Hoek, Lomuscio and Wooldridge show \cite{van2006complexity} that
complexity of model checking for \acro{atl} is sensitive to model
representation. It is polynomial only if an \emph{explicit}
enumeration of \emph{all} components of the model is assumed.  For
models represented in a~\emph{(simplified) reactive modules language}
(\acro{rml}) \cite{AluHen9907-0} complexity of model checking for
\acro{atl} becomes as hard as the satisfiability problem for this
logic, namely \exptime~\cite{van2006complexity}.

We present an alternative semantics that interprets formulas of
ordinary \acro{atl} over concurrent game structures with
\emph{roles}. As we describe in Section \ref{sec:rcgs}, such
structures introduce an extra element -- a set $R$ of roles. Agents
belonging to the same role are considered \emph{homogeneous} in the
sense that all consequences of their actions are captured by
considering only the number of \emph{votes} an action gets (one vote
per agent). We give some examples that motivate our approach and prove
equivalence with \acro{atl} based on concurrent game structures. We
then discuss model checking, showing it to be of polynomial complexity
in the size of models. This seems significant, since as long as the
number of roles remain fixed, the size of our models does \emph{not}
grow exponentially in the number of players.

The structure of our paper is as follows. We present a~revised
formalism for \acro{atl} in Section \ref{sec:ratl}, prove equivalence
with the standard one in Section \ref{sec:equiv}, discuss model
checking results in Section \ref{sec:mc} and conclude in Section
\ref{sec:concl}.

\section{Role-based semantics for ATL}
\label{sec:ratl}

The language of ordinary \acro{atl} is the following, as presented in
\cite{alur2002alternating}:

$$ \phi ::= p \mid \neg \phi \mid \phi \wedge \phi \mid\catld{A}\bigcirc\phi \ \mid \ \catld{A}\Box\phi \ \mid \catld{A}\phi\mathcal U\phi$$
where $p$ is propositional letter, and $A$ is a~coalition of
agents. We follow standard abbreviations (\eg $\catld{~}$ for
$\catld{\emptyset}$) and skip connectives that are derivable.

\subsection{Concurrent Game Structures with Roles}
\label{sec:rcgs}

In this section we will introduce \emph{concurrent game structures
  with roles} (\acro{rcgs}) and consider some examples. We will be
using the notation $[n] = \{1,\ldots,n\}$, and we will let $A^B$
denote the set of functions from $B$ to $A$. We will often work with
tuples $v = \<v_1,\ldots,v_n\>$ and we will often view $v$ as a
function with domain $[n]$ and write $v(i)$ for $v_i$. We will do
addition and subtraction on tuples of the same arity component by
component, e.g. for $v = \<v_1,\ldots,v_n\>, v' =
\<v'_1,\ldots,v'_n\>$, $v-v' = \<v_1-v'_1,\ldots,v_n-v'_n\>$. Given a
function $f: A \times B \to C$ and $a \in A$, we will use $f_a$ to
denote the function $B \to C$ defined by $f_a(b) = f(a,b)$ for all $b
\in B$.

\begin{definition}
  An \acro{rcgs} is a tuple $H = \langle \agents, R, \roles, Q, \Pi,
  \pi, \act, \delta\rangle$ where:
  \begin{itemize}
  \item $\agents$ is a non-empty set of players. In this text we
    assume $\agents = [n]$ for some $n \in \Nat$, and we will reserve
    $n$ to mean the number of agents.
  \item $Q$ is the non-empty set of states.
  \item $R$ is a non-empty set of roles. In this text we assume $R
    =[i]$ for some $i \in \Nat$.
  \item $\roles: Q \times R \to \wp(\agents)$, such that for every $q
    \in Q$ we have
    \begin{itemize}
    \item For all $r,r' \in R$, if $r \neq r'$ then $\roles(q,r) \cap
      \roles(q,r') = \emptyset$
    \item $\bigcup_{r \in R} \roles(q,r) = \agents$
    \end{itemize}
    For a coalition $A \subseteq \agents$ we write $A_{r,q}$ for the
    agents in $A$ which belong to role $r$ at $q$, i.e. $A_{r,q} =
    \mathcal R(q,r) \cap A.$
  \item $\Pi$ is a set of propositional letters and $\pi: Q \to
    \wp(\Pi)$ maps states to the propositions true at that state.
  \item $\act : Q \times R \to \mathbb N^+$ is the number of available
    actions in a given state for a given role.
  \item For $\mathcal{A}=[n] = \{1,\ldots n\}$, we say that the set of
    \emph{complete} votes for a role $r$ in a state $q$ is $\vote r q
    = \{v_{r,q} \in [n]^{[\act(q,r)]} \mid \sum_{1 \leq a \leq
      \act(q,r)} v_{r,q}(a) = |\mathcal R(q,r)|\}$, the set of
    functions from the available actions to the number of agents
    performing the action. The functions in this set account for the
    actions of \emph{all} the agents. The set of \emph{complete}
    profiles at $q$ is $\votep q = \prod_{r \in R}\vote r q$. For each
    $q \in Q$ we have a transition function at $q$, $\delta_q: \votep
    q \to Q$ defining a partial function $\delta: Q \times \bigcup_{q
      \in Q}\votep q \to Q$ such that for all $q \in Q$, $P \in \votep
    q$, $\delta(q,P) = \delta_q(P)$
  \end{itemize}
\end{definition}
To illustrate how \acro{rcgs} differs from an ordinary concurrent game
structure, we provide some examples.

\begin{example}
  We construct an example similar to the well-known train-controller
  scenario \cite{alur2002alternating}, but in contrast to the
  original, in our scenario there are $n_t$ trains. Consider
  a~turn-based synchronous game structure with roles $S_{train} =
  \langle \agents, $ $R, \roles, Q, \Pi, \pi, \act, \delta\rangle$
  where:
  \begin{itemize}
  \item $\agents = \{1,\dots,n_t, n_t+1\}$. There are $n_t$ trains and
    one controller.
  \item $R = \{train, ctr\}$. There are two roles: one for trains and
    one for the controller.
  \item $Q = \{q_{0},q_{1},q_{2},q_{3}\}$.
  \item $\mathcal R(q_i, train) = [n_t]$, and $\mathcal R(q_i, ctr) =
    \{n_{t} +1\}$, for all $q_i \in Q$.
  \item $\Pi = \{out\_of\_gate,in\_gate,request,grant\}$
  \item $\pi(q_{0}) = \{out\_of\_gate\}$, $\pi(q_{1}) =
    \{out\_of\_gate,request\}$, \\$\pi(q_{2}) =
    \{out\_of\_gate,grant\}$, $\pi(q_{3}) = \{in\_gate\}$.
  \item $ \act(q_0,train) = 2, \quad \act(q_0, ctr) = 1, \quad \act(q_1,train) = 1, \quad \act(q_1,ctr)= 3 $, \\
    $ \act(q_2,train) = 2, \quad \act(q_2, ctr) = 1, \quad
    \act(q_3,train) = 1, \quad \act(q_3, ctr) = 2 $.
  \item and finally
    \begin{align*}
      &\delta(q_0, \<(0, n_t), 1\>) =\delta(q_1, \<n_t, (1, 0, 0)\>) = \delta(q_2, \<(0, n_t), 1\>) \\
      &\quad\quad =\delta(q_3, \<(a, n_t-a), 1\>) = q_0 & \text{ where } 1 \leq a \leq n_{t}\\
      &\delta(q_0, \<(a, n_t-a), 1\>) = \delta(q_1, \<n_t, ((0,1,0))\>) = q_1 & \text{ where } 1 \leq a \leq n_{t}\\
      &\delta(q_1, \<n_t, (0,0,1)\>) = \delta(q_2, \<(a, n_t - a), 1\>) = q_2 & \text{ where } 2 \leq a \leq n_{t}\\
      &\delta(q_2, \<(1, n_t-1), 1\>) = \delta(q_3, \<(0, n_t), 1\>) = q_3 \\
    \end{align*}

  \end{itemize}
  Figure \ref{fig:trains-ratl} presents the example in a~visual
  way. The model can be seen as a generalization of the classical
  train-controller example. In $q_0$ we stay in $q_0$ unless at least
  one train issues a request. In $q_1$ the controller behaves as
  before; it can postpone making a decision (staying in $q_1$), reject
  all requests (going to $q_0$), or accept the requests (going to
  $q_2$). In $q_2$ the trains can choose to enter the tunnel, but only
  one of them may do so; if nobody attempts to enter the grant is
  revoked (or relinquished), if more than one train attempts to enter
  we stay in $q_2$, and finally if (the trains reach an agreement and)
  only one train enters we go to $q_3$. In $q_3$ \emph{any} train may
  decide that the train in the tunnel has to leave (returning to
  $q_0$), and the train in the tunnel \emph{must} comply. This
  reflects the homogeneity among players in the trains role. The
  action of deciding to leave the tunnel is shared among all trains,
  and the train actually in the tunnel remains unidentified.

  \begin{figure}[h]
    \begin{center}
      \includegraphics[scale=.8]{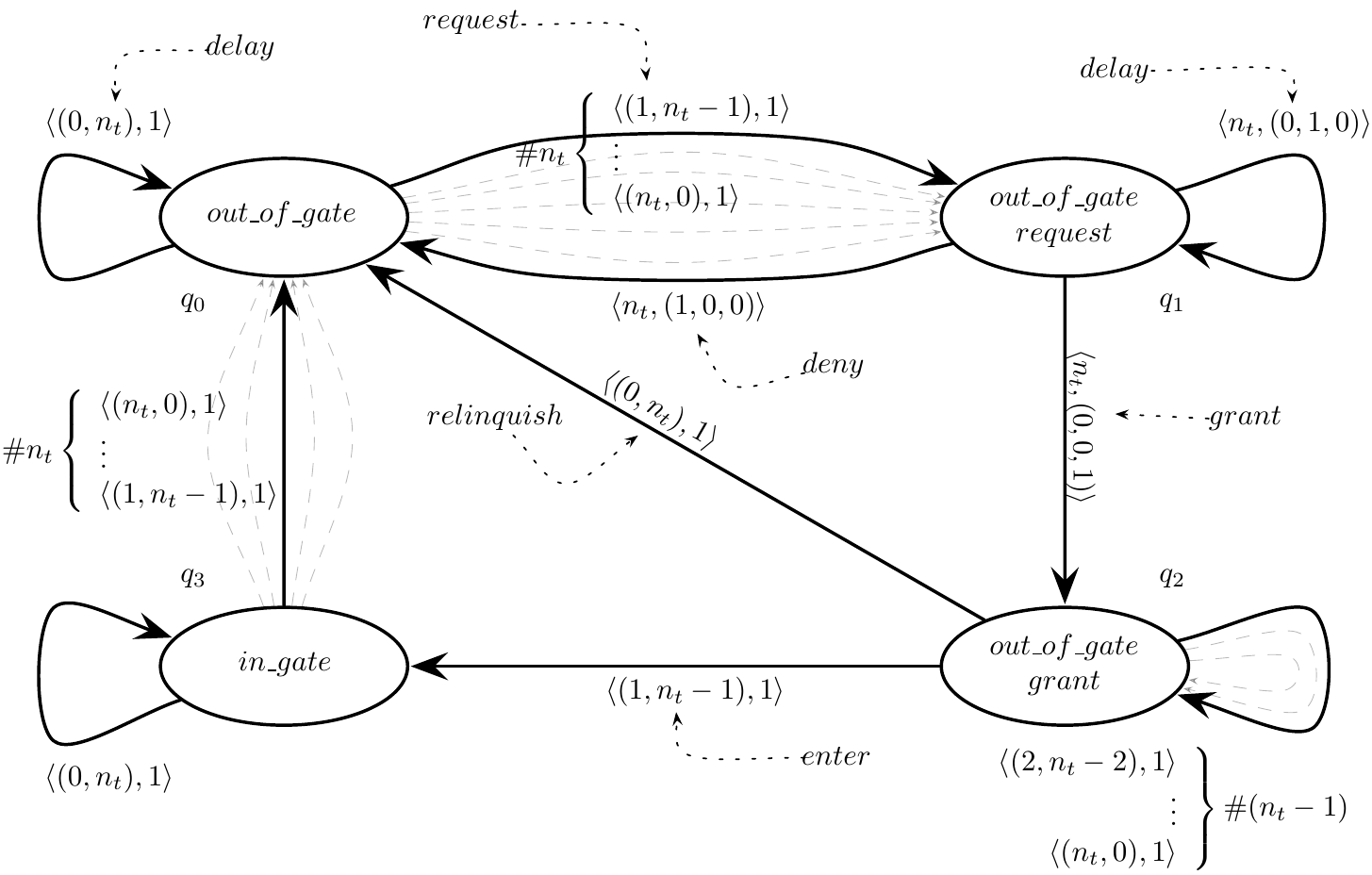}
    \end{center}
    \caption{Train controller model for $n_{t}$ trains (similar to the
      one presented in \cite{alur2002alternating}).}
    \label{fig:trains-ratl}
  \end{figure}

  Notice that in the single-train case ($n_t = 1$), the train can not
  wait before entering the tunnel after being granted permission
  (\emph{and} retain the permission). This could of course easily be
  avoided by adding another action. More importantly, in the case of
  several trains, the controller can not distinguish between the
  different trains, so permission must be granted to all or none. This
  is a consequence of the strict homogeneity in the model: not only
  are the agents homogeneous in terms of the actions available to
  them, we can not reasonably distinguish between them as long as they
  remain in the same role. Notice that this feature allow us to add
  any number of trains to the scenario without incurring more than a
  linear increase in the size of the model (total number of
  profiles). This would not be possible if we did not have roles. If
  the model above was to be rendered as a concurrent game structure,
  the number of possible ways in which trains could act would be
  exponential in all states where trains have to make a choice of what
  action to perform. This would be the case even if, as in the
  scenario above, almost all possible combinations of choices should
  be treated in the same way by the system.
\end{example}

Sometimes homogeneity is desirable. In our trains and controller
example, for instance, homogeneity strongly \emph{encourages}
cooperation among trains; no one can enter the gate unless everyone
agree, and everyone knows that whoever gets to enter \emph{must} leave
as soon as he is asked to. On the other hand, we notice that it is
impossible for any train to enter the gate unless all trains
cooperate. This might be overly restrictive. By adding more roles,
however, we can amend this while still retaining many of the benefits
of using roles.

\begin{example}
  \label{ex:example2}
  In the previous example all trains were equal before the controller;
  the controller could not distinguish between trains. We could grant
  the agents much more individual identity by simply adding one more
  role, and in this example we sketch the result of doing so.  First
  we make $n_t$ ``copies'' of the previous model sharing the state
  $q_0$. In Figure \ref{fig:autonomous-trains} we illustrate the
  resulting model for $n_t = 3$. In $q_0$ we let the trains vote for
  which train should be allowed to request permission to enter the
  tunnel. We assume majority voting, but we do not resolve ties.  It
  means that if one train, $x$, gets more votes then all others we go
  to "his" state, $q_1x$. Otherwise we just loop on $q_0$. If we get
  to a $q_1$-state, the controller can grant or reject the
  request. Contrary to the previous example the controller now knows
  which train is being proposed. If the controller grants the request,
  the selected train is put in a privileged role and given the sole
  choice of what to do with the permission.

  \begin{figure}[h]
    \includegraphics[scale=.6]{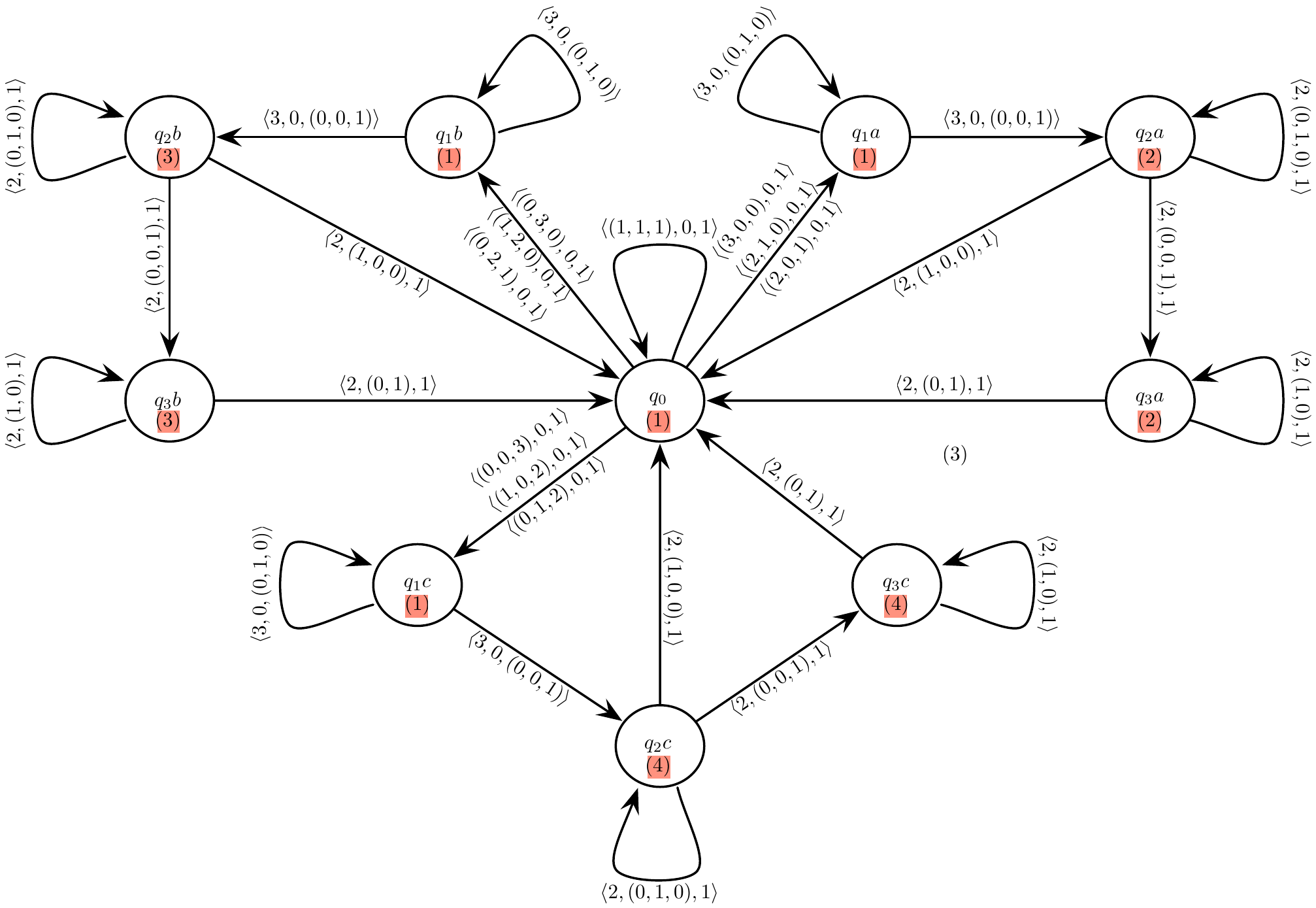}
    \caption{The ``autonomous trains'' model for $n_{t} = 3$. Numbers
      in red indicate membership in roles: $(1)\ = \<\agents,
      \emptyset, \{ctr\}\>$, $(2)\ = \<\agents\setminus\{a\},
      \{a\},\{ctr\}\>$, $(3)\ = \<\agents\setminus\{b\},
      \{b\},\{ctr\}\>$, $(4)\ = \<\agents\setminus\{c\}, \{c\},
      \{ctr\}\>$. Also, each $q_{1}$ state has a~transition pointing
      at $q_{0}$ labeled $\<3,0,(1,0,0)\>$ that was omitted from the
      picture for the sake of clarity.}
    \label{fig:autonomous-trains}
  \end{figure}

  The model has grown, so the trains gain autonomy at a cost. Still,
  this cost is much less than the cost of modelling this scenario in a
  \acro{cgs}. There, if each train is to have the option to ``vote''
  for any train in $q_0$, each train must have $n_t$ actions
  available. We would get $n_t^{n_t}$ edges leading out from $q_0$! In
  the \acro{rcgs} model we get a substantially smaller degree, the
  following table summarizes the difference (formulas for counting the
  degree are explained and discussed further in section \ref{sec:mc})
  \begin{center}
    \begin{tabular}{r l l l l l c}
      \toprule
      $n_{t}$: & 3 & 4 & 5 & 6 & \ldots & n \\ \midrule
      \acro{cgs}: & 27 & 256 & 3125 & 46656 & \ldots & $n^n$ \\ 
      \acro{rcgs}: & 10 & 35 & 127 & 462 & \ldots & $\frac{(2n - 1)!}{n!(n-1)!}$ \\ \bottomrule
    \end{tabular}
  \end{center}

\end{example}

Before we move on we introduce some more notation.  Given a role $r
\in R$, a state $q$ and a coalition $A$, the set of $A$-votes for $r$
at $q$ is $\kvote r q A$, defined as follows:
$$
\kvote r q A = \left\lbrace v \in [|A_{r,q}|]^{[\act(q,r)]} ~\left|~
    \sum_{a \in [\act(q,r)]} v(a) = |A_{r,q}|\right.\right\rbrace
$$
The $A$-votes for $r$ at $q$ gives the possible ways agents in $A$
that are in role $r$ at $q$ can vote. Given a state $q$ and a
coalition $A$, we define the set of $A$-profiles at $q$:
$$
\kvotep q A = \{\langle v_1, \ldots , v_{|R|}\rangle \mid 1 \leq i
\leq |R| : v_i \in \kvote r q A\}
$$
When we say that a function $v: [\act(q,r)] \to [n]$ is a complete
vote (for $r$ at $q$), we mean that $v \in \kvote r q {\agents}$.
\medskip For any $v \in \kvote r q A$ and $w \in \kvote r q B$ we
write $v \leq w$ iff for all $i \in [\act(q,r)]$ we have $v(i) \leq
w(i)$. If $v \leq w$, we say that $w$ \emph{extends} $v$. If $F =
\langle v_1, \ldots, v_R \rangle \in \kvotep q A$ and $F' = \langle
v_1', \ldots, v_R'\rangle \in \kvotep q B$ with $v_i \leq v_i'$ for
every $1 \leq i \leq |R|$, we say that $F \leq F'$ and that $F$
extends $F'$.

An $A$ profile $F \in \kvotep{q}{A}$ is a \emph{complete} profile iff
the sum of its components equal $|\agents|$, \ie $F \in \votep{q}$ iff
$\left(\sum_{r \leq |R|}\sum_{a \in \act(q,r)}v(a)\right) = |\agents|$
iff $A = \agents$. Given a (partial) profile $F'$ at a state $q$ we
write $ext(q,F)$ for the set of all complete profiles that extend
$F'$.

Given two states $q,q' \in Q$, we say that $q'$ is a \emph{successor}
of $q$ if there is some $F \in \votep q$ such that $\delta(q,F) =
q'$. A \emph{computation} is an infinite sequence $\lambda =
q_0q_1\ldots$ of states such that for all positions $i\geq 0$,
$q_{i+1}$ is a successor of $q_i$. We follow the standard
abbreviations, hence $q$-computation denotes a computation starting at
$q$, and $\lambda[i]$, $\lambda[0,i]$ and $\lambda[i,\infty]$ denote
the $i$-th state, the finite prefix $q_0q_1\ldots q_i$ and the
infinite suffix $q_iq_{i+1}\ldots$ of $\lambda$ for any computation
$\lambda$ and its position $i\geq 0$. An \emph{$A$-strategy} for $A
\subseteq \agents$ is a function $s_A: Q \to \bigcup_{q \in Q}\kvotep
q A$ such that $s_A(q) \in \kvotep q A$ for all $q \in Q$. That is,
$s_A$ maps states to $A$-profiles at that state. The set of all
$A$-strategies is denoted by $\strats A$. If $s$ is an
$\agents$-strategy and we apply $\delta_{q}$ to $s(q)$, we obtain a
unique new state $q' = \delta_q(\strat(q))$. Iterating, we get the
\emph{induced} computation $\lambda_{s,q} = q_{0}q_{1}\ldots$ such
that $q = q_0$ and $\forall i~\geq 0: \delta_{q_{i}} (s(q_i)) =
q_{i+1}$. Given two strategies $s$ and $s'$, we say that $s \leq s'$
iff $\forall q \in Q: s(q) \leq s'(q)$. Given an $A$-strategy $s_A$
and a state $q$ we get an associated \emph{set} of computations
$out(s_A,q)$. This is the set of all computations that can result when
at any state, the players in $A$ are voting in
the way specified by $s_A$: \\
$out(s_A,q) = \{\lstrat {\strat,q} \mid \strat \textit{ is an}\
\mathcal{A}\textit{-strategy and } \strat \geq s_A\} $ It will also be
useful to have access to the set of states that can result in the next
step when $A \subseteq \agents$ follows strategy $s_A$ at state $q$,
$succ(q,s_A) = \{q' \in Q \mid \exists F \in ext(q,s_A): \delta(q,F) =
q'\}$. Clearly, $q' \in succ(q,s_A)$ iff there is some $\lambda \in
out(q,s_A)$ such that $q' = \lambda[0]$.

\subsection{New Semantics for ATL}

\begin{definition}\label{def:basicsem} 
  Given a \acro{rcgs} $S$ and a state $q$ in $S$, we define the
  satisfaction relation $\models$ inductively:
  \begin{align*}
  S,q \models &p \iff p \in \pi(q)\\
  S,q \models &\neg \phi \iff \text{ not } S,q \models \phi \\
  S,q \models &\phi \wedge \phi' \iff S,q \models \phi \text{ and } S,q \models \phi'\\
  S,q \models &\catld{A}\bigcirc\phi \iff \text{ there is } s_A \in \strats A \text{ such that } \\
 & \text{ for all } \lambda \in out(s_A,q) \text{, we have } S, \lambda[1] \models \phi \\
  S,q \models &\catld{A}\Box\phi \iff \text{ there is } s_A \in \strats A \text{ such that } \\
 & \text{for all } \lambda \in out(s_A,q) \text{ we have } S, \lambda[i] \models \phi \text{ for all } i \geq 0 \\
 S,q \models &\catld{A}\phi\mathcal{U}\phi' \iff \text{ there is }s_A \in \strats A \text{ such that } \\
 & \text{for all } \lambda \in out(s_A,q) \text{ we have } S, \lambda[i] \models \phi' \text{ and }S, \lambda[j] \models \phi\\
 &  \text{for some } i \geq 0 \text{ and for all }0 \leq j < i \\
\end{align*}
\end{definition}

\section{Equivalence between RCGS and CGS}
\label{sec:equiv}
In this section we show that definition \ref{def:basicsem} provides an
equivalent semantics for \acro{atl}. We do this by first giving a
surjective function $f$ that takes an \acro{rcgs} and returns a
\acro{cgs}. Then we show that $S$ and $f(S)$ satisfy the same
\acro{atl} formulas.

Remember that a~concurrent game structure is a~tuple $\langle
\agents,Q,\Pi,\pi,d,\delta' \rangle$ where every element is defined as
for an \acro{rcgs} except $d:\agents \times Q \to \Nat^{+}$ that maps
agents and states to actions available at that state, and $\delta'$ that is a partial function from states and action tuples to states defined by $\delta'(q,t) = \delta'_q(t)$ where $\delta'_q: \prod_{a \in \agents} [d_{a}(q)] \to Q$ is a transition function at $q$ based on tuples of actions rather than profiles. The
satisfaction relation for \acro{atl} based on \acro{cgs}s can be
defined exactly as in definition \ref{def:basicsem}, the difference
concerning only what counts as a strategy.

We refer to elements of $\prod_{a \in \agents} [d_{a}(q)]$ as
\emph{complete} action tuples at $q$.  A (memory-less) strategy for $a
\in \agents$ in a \acro{cgs} $M$ is a function $s_a: Q \to \mathbb
N^+$ such that for all $q \in Q$, $s_a(q) \in [d_a(q)]$ while a
strategy for $A \subseteq \agents$ is a list of strategies for all
agents in $A$, $s_A = \langle
s_{a_1},s_{a_2},\ldots,s_{a_{|A|}}\rangle$, for $A = \{a_1, a_2,
\dots, a_{|A|}\}$. We denote the set of strategies for $A \subseteq
\agents$ by $\strats A$. When needed to distinguish between different
structures we write $\strats {S,A}$ to indicate that we are talking
about the set of strategies for $A$ in $S$.

We say that a~complete action tuple at $q$, $t = \langle
i_{a_1},\ldots, i_{a_n} \rangle$ extends a strategy $s_A \in \strats
A$ if for all $a_j \in A$ we have $i_{a_j} = s_{a_j}(q)$. We denote
the set of all complete action tuples at $q$ extending $s_A$ by
$ext(q,s_A)$. For any state $q\in Q$ we have the set of all
computations that comply with $s_A$:

\begin{align*}
  out(q,s_A) = &\{\lambda = q_0q_1q_2\ldots \\
  &\mid q = q_0 \text{ and for all
  } i \in \Nat: \exists t \in ext(q_i,s_A) ,\ \delta(q,t) = q_{i+1}\}\\
\end{align*}
We define the set of $s_A$-successors at $q \in Q$:

$$ 
succ(q,s_A) = \{q' \in Q \mid \exists t \in ext(q,s_A) ,\ \delta(q,t)
= q'\}
$$
When we need to make clear which structure we are talking about, we
write $succ(S, q, s_A)$. Observe that $q' \in succ(q, s_A)$ iff $q' =
\lambda[1]$ for some $\lambda \in out(q, s_A)$.

The translation function $f$ from \acro{rcgs} to \acro{cgs} is defined
as follows:

$$ f \langle \agents, R, \roles, Q, \Pi, \pi, \act, \delta\rangle = \langle \agents, Q, \Pi, \pi, d, \delta' \rangle  $$
where:
\begin{align*}
  d_a(q) &= \act(q, r) & \text{where } a \in \roles(q, r) \\
  \delta'(q, \alpha_1, \ldots, \alpha_n) &= \delta(q, v_1, \dots, v_{|R|}) &\text{where for each role $r$} \\
\end{align*}
\vspace{-3.28em}
\[
v_r = \langle |\{i \in \roles(q, r)~|~ \alpha_{i} = 1\}|, \ldots ,
|\{i \in \roles(q, r)~|~ \alpha_{i} = \act(q,r)\}| \rangle
\]

We can see straight away that $f$ is surjective because for any
\acro{cgs} $S'$ with $n$ agents we could define a \acro{rcgs} $S$ with
that many roles where each role contains exactly one agent. A vote for
a role $r$, $v_r$, at $q$ would then simply be a $d_a(q)$-tuple
consisting of a single 1 (representing the agents chosen action) and
otherwise zeros. It is easy to verify that $f(S) = S'$.

Given either a~\acro{cgs} or an \acro{rcgs} $S$, we define the set of
sets of states that a~coalition $A$ can \emph{enforce} in the next
state of the game:
$$force(S,q,A) = \{succ(q,s_A) \mid s_A \text{ is a strategy for } A
\text{ in } S\}.$$

The first thing we do towards showing equivalence is to describe a
surjective function $m : \strats {f(S)} \to \strats S$ mapping action
tuples and strategies of $f(S)$ to profiles and strategies of $S$
respectively. For all $A \subseteq \agents$ and any action tuple for
$A$ at $q$, $t_q = \<\alpha_{a_1},\alpha_{a_2},...,\alpha_{a_{|A|}}\>$
with $1 \leq \alpha_{a_i} \leq d_{a_i}(q)$ for all $1 \leq i \leq
|A|$, the $A$-profile $m(t_q)$ is defined in the following way:

\begin{align*}
  m(t_q) &= \<v(t_q,1),\ldots,v(t_q,|R|)\> \text{ where for all } 1 \leq r \leq |R| \text{ we have } \\
  v(t_q,r) &= \<|\{a \in A_{r,q} \mid \alpha_a = 1\}|,\ldots,|\{a \in
  A_{r,q} \mid \alpha_a =
  \act(q,r)\}|\> \\
\end{align*}

Thus the $i$-th component of $v(t_q,r)$ will be the number of agents
from $A$ in role $r$ at $q$ that perform action $i$.

Given a strategy $s_A$ in $f(S)$ we define the strategy $m(s_A)$ for
$S$ by taking $m(s_A)(q) = m(s_A(q))$ for all $q \in Q$.

Surjectivity of $m$ is helpful since it means that for every possible
strategy that exists in the \acro{rcgs} $S$, there is a corresponding
one in $f(S)$. This in turn means that when we quantify over
strategies in one of $S$ and $f(S)$ we are implicitly also quantifying
over strategies in the other. Showing equivalence, then, can be done
by showing that these corresponding strategies have the same
strength. Before we proceed, we give a proof of surjectivity of $m$.

\begin{lemma}\label{lemma:surj}
  For any \acro{rcgs} $S$ and any $A \subseteq \agents$, the function
  $m: \strats{f(S),A} \to \strats{S,A}$ is surjective
\end{lemma}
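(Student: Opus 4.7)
The plan is to show surjectivity pointwise: given any target strategy $\sigma \in \strats{S,A}$, I will explicitly construct a strategy $s_A \in \strats{f(S),A}$ with $m(s_A) = \sigma$. Since strategies on both sides are defined by independent choices at each state $q \in Q$, it suffices to show that for every $q$ and every $A$-profile $F = \langle v_1, \ldots, v_{|R|}\rangle \in \kvotep{q}{A}$, there is an action tuple $t_q = \langle \alpha_a \rangle_{a \in A}$ (with $\alpha_a \in [d_a(q)]$) such that $m(t_q) = F$; then defining $s_a(q) = \alpha_a$ for each $a \in A$ yields the desired strategy, and $m(s_A)(q) = m(s_A(q)) = F = \sigma(q)$ follows by definition.

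The pointwise construction proceeds role by role. Fix $q$ and $r \in R$, and consider the component $v_r \in \kvote{r}{q}{A}$. By definition, $\sum_{a \in [\act(q,r)]} v_r(a) = |A_{r,q}|$, so we may partition the finite set $A_{r,q}$ into blocks $B_1, \ldots, B_{\act(q,r)}$ with $|B_i| = v_r(i)$ (for instance, fix an enumeration of $A_{r,q}$ and let $B_i$ be the next $v_r(i)$ agents in that enumeration). For each agent $a \in B_i$ set $\alpha_a := i$; note that $\alpha_a \in [\act(q,r)] = [d_a(q)]$, as required by the definition of $f$. Doing this for every role partitions $A$ (because the sets $\roles(q,r)$ partition $\agents$, and hence the sets $A_{r,q}$ partition $A$), so each $a \in A$ is assigned exactly one $\alpha_a$.

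It remains to verify that $m$ sends the assembled tuple back to $F$. For each role $r$ and each action $i \in [\act(q,r)]$, the $i$-th component of $v(t_q, r)$ is $|\{a \in A_{r,q} \mid \alpha_a = i\}| = |B_i| = v_r(i)$ by construction, so $v(t_q,r) = v_r$ for every $r$ and hence $m(t_q) = F$. The main subtlety is simply to observe that the partitionability at each state is exactly what the sum condition in the definition of $\kvote{r}{q}{A}$ guarantees; once this is in place, the verification is immediate and surjectivity follows by performing the construction uniformly over all $q \in Q$ with target $\sigma(q)$.
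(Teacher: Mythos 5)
Your proof is correct and follows essentially the same route as the paper's: both construct, state by state and role by role, a function assigning each agent in $A_{r,q}$ an action so that exactly $v_r(i)$ agents pick action $i$, which the sum condition $\sum_i v_r(i) = |A_{r,q}|$ makes possible. Your version merely makes explicit (via the block partition) the existence claim that the paper leaves as ``it follows that there are functions $\alpha$''.
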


\begin{proof}
  Let $p_A$ be some strategy for $A$ in $S$. We must show there is a
  strategy $s_A$ in $f(S)$ such that $m(s_A) = p_A$.  For all $q \in
  Q$, we must define $s_A(q)$ appropriately. Consider the profile
  $p_A(q) = \<v_1,\ldots,v_{|R|}\>$ and note that by definition of a
  profile, all $v_r$ for $1 \leq r \leq |R|$ are $A$-votes for $r$ and
  that by definition of an $A$-vote, we have $\sum_{1 \leq i \leq
    \act(q,r)}v_r(i) = |A_{r,q}|$. Also, for all agents $a,a' \in
  A_{r,q}$ we know, by definition of $f$, that $d_a(q) = d_{a'}(q) =
  \act(q,r)$.

  From this it follows that there are functions $\alpha: A \to \mathbb
  N^+$ such that for all $a \in A$, $\alpha(a) \in [d_a(q)]$ and $|\{a
  \in A_{r,q} \mid \alpha(a) = i\}| = v_r(i)$ for all $1 \leq i \leq
  \act(q,r)$, i.e.
$$v_r = \< |\{a \in A_{r,q} | \alpha (a) = 1\}|, \dots, 
|\{a \in A_{r,q} | \alpha (a) = \act(q,r)\}| \>$$ We choose some such
$\alpha$ and $s_A = \<\alpha(a_1), \dots, \alpha(a_{|A|})\>$. Having
defined $s_A$ in this way, it is clear that $m(s_A) = p_A$.
\end{proof}

Using the surjective function $m$ we can prove the following lemma,
showing that the "next time" strength of any coalition $A$ is the same
in $S$ as it is in $f(S)$.

\begin{lemma}\label{lemma:basic}
  For any \acro{rcgs} $S$, any state $q \in Q$ and any coalition $A
  \subseteq \agents$, we have $force(S,A,q) = force(f(S),A,q)$
\end{lemma}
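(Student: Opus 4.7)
The plan is to reduce both inclusions of the desired set equality to a single auxiliary claim, namely that for every $A$-strategy $s_A$ in $f(S)$ and every state $q$, we have $succ(f(S),q,s_A) = succ(S,q,m(s_A))$. Once this is proved, the inclusion $force(f(S),A,q) \subseteq force(S,A,q)$ is immediate: every set in the left-hand side arises from some $s_A \in \strats{f(S),A}$, and $m(s_A) \in \strats{S,A}$ witnesses the same set on the right. The reverse inclusion uses Lemma~\ref{lemma:surj}: any $p_A \in \strats{S,A}$ can be lifted to some $s_A \in \strats{f(S),A}$ with $m(s_A) = p_A$, and then the claim says $succ(S,q,p_A) = succ(f(S),q,s_A)$, which lies in $force(f(S),A,q)$.

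To prove the auxiliary claim, I would show that the action tuples in $ext(q,s_A)$ and the complete profiles in $ext(q,m(s_A(q)))$ produce exactly the same set of successor states under $\delta'$ and $\delta$ respectively. For the direction $\subseteq$, pick $t \in ext(q,s_A)$. By the construction of $f$, $\delta'_q(t) = \delta_q(m(t))$. It remains to check that $m(t)$ extends $m(s_A(q))$: for each role $r$ and action $i \in [\act(q,r)]$, the $i$-th component of the $r$-th entry of $m(t)$ counts all agents in role $r$ performing action $i$, which is at least the count of such agents in $A$ (which equals the corresponding component of $m(s_A(q))$, since $t$ agrees with $s_A$ on $A$). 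Hence $m(s_A(q)) \leq m(t)$, and $\delta_q(m(t))$ lies in the right-hand side.

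For the direction $\supseteq$, given a complete profile $F = \langle w_1, \ldots, w_{|R|} \rangle$ extending $m(s_A(q)) = \langle v_1, \ldots, v_{|R|} \rangle$, I need to construct $t \in ext(q,s_A)$ with $m(t) = F$; then $\delta_q(F) = \delta'_q(t)$ lies in the left-hand side. For each role $r$, set $u_r = w_r - v_r$, which is componentwise nonnegative and sums to $|\roles(q,r) \setminus A|$ (because $F$ is complete and $s_A(q)$ is an $A$-profile). On agents $a \in A$, force $\alpha_a = s_a(q)$; by definition of $m$ these contributions aggregate to exactly $v_r$ per role, as required for $t$ to extend $s_A$. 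On agents $a \in \roles(q,r) \setminus A$, distribute actions according to $u_r$ using the same counting-to-assignment argument as in the proof of Lemma~\ref{lemma:surj} (all such agents share the same action set $[\act(q,r)]$, so any componentwise assignment realizing $u_r$ works). The combined assignment is the desired $t$, and by construction $m(t) = F$.

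The main obstacle is the $\supseteq$ direction, and specifically the need to respect the strategy $s_A$ while still realizing $F$ exactly: we cannot freely redistribute actions among $A$-agents, since $s_A$ fixes their individual choices. The key observation that unblocks this is that the $A$-contributions to the counts are \emph{already} determined to equal $v_r$ by the very definition of $m(s_A(q))$, so the residual $u_r = w_r - v_r$ is exactly what must be produced by the agents outside $A$, and the surjectivity-style construction applies cleanly to them.
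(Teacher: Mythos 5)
Your proposal is correct and follows essentially the same route as the paper: both reduce the set equality to showing $succ(S,q,m(s_A)) = succ(f(S),q,s_A)$ for all $s_A$ via the surjectivity of $m$ (Lemma~\ref{lemma:surj}), and both handle the harder direction by subtracting the $A$-profile from the complete profile to obtain an $(\agents\setminus A)$-vote per role and then realizing it by an explicit assignment of actions to the agents outside $A$. If anything, you spell out the direction the paper dismisses as ``easy'' in slightly more detail, but the underlying argument is identical.
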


\begin{proof}
  By definition of $force$ and lemma \ref{lemma:surj} it is sufficient
  to show that for all $s_A \in \strats{f(S),A}$, we have
  $succ(S,m(s_A),q) = succ(f(S),s_A,q)$. We show $\subseteq$ as
  follows: Assume that $q' \in force(S,m(s_A),q)$. Then there is some
  complete profile $P = \<v_1,\ldots,v_{|R|}\>$, extending
  $m(s_A)(q)$, such that $\delta(q,P) = q'$. Let $m(s_A)(q) =
  \<w_1,\ldots,w_{|R|}\>$ and form $P' = \<v'_1,\ldots,v'_{|R|}\>$
  defined by $v'_i = v_i - w_i$ for all $1 \leq i \leq |R|$. Then each
  $v'_i$ is an $(\agents \setminus A)$-vote for role $i$, meaning that
  the sum of entries in the tuple $v'_i$ is $|(\agents \setminus
  A)_{r,q}|$.  This means that we can define a function $\alpha:
  \agents \to \Nat^+$ such that for all $a \in \agents$, $\alpha(a)
  \in [d_a(q)]$ and for all $a \in A$, $\alpha(a) = s_a(q)$ and for
  every $r \in R$ and every $a \in (\agents \setminus A)$, and every
  $1 \leq j \leq \act(q,r)$, $|\{a \in (\agents \setminus A)_{r,q}
  \mid \alpha(a) = j\}| = v'_r(j)$. Having defined $\alpha$ like this
  it follows by definition of $m$ that for all $1 \leq j \leq
  \act(q,r)$, $|\{a \in A_{r,q} \mid \alpha(a) = j\}| = w_r(j)$. Then
  for all $r \in R$ and all $1 \leq j \leq \act(q,r)$ we have $|\{a
  \in \roles(q,r) \mid \alpha(a) = j\}| = v_r(j)$. By definition of
  $f(S)$ it follows that $q' = \delta(q,P) = \delta'(q,\alpha)$ so
  that $q' \in force(f(S),s_A,q)$. We conclude that $force(S,f(s_A),q)
  \subseteq force(f(S),s_A,q)$. The direction $\supseteq$ follows
  easily from the definitions of $m$ and $f$.
\end{proof}

Given a structure $S$ (with or without roles), and a formula $\phi$,
we define $true(S,\phi) = \{q \in Q \mid S,q \models
\phi\}$. Equivalence of models $S$ and $f(S)$ is now demonstrated by
showing that the equivalence in next time strength established in
lemma \ref{lemma:basic} suffices to conclude that $true(S,\phi) =
true(f(S),\phi)$ for all $\phi$.

\begin{theorem}\label{thm:basicthm}
  For any \acro{rcgs} $S$, any $\phi$ and any $q \in Q$, we have $S,q
  \models \phi$ iff $f(S),q \models_{CGS} \phi$
\end{theorem}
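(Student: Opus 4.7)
The plan is to proceed by structural induction on $\phi$. The atomic case is immediate since $f$ preserves $\pi$ and $Q$, and the Boolean cases follow routinely from the induction hypothesis.

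For the strategic modalities, the key is to lift Lemma~\ref{lemma:basic} from single-step enforcement to enforcement along infinite computations. I would first establish the auxiliary claim that for any $A \subseteq \agents$, any $s_A \in \strats{f(S),A}$, and any $q \in Q$, the sets $out(s_A,q)$ computed in $f(S)$ and $out(m(s_A),q)$ computed in $S$ agree as sets of infinite state sequences. This follows by iterating Lemma~\ref{lemma:basic}: at each state $q'$ reachable along a computation, the set of possible next states when $A$ commits to $s_A$ in $f(S)$ equals the set of possible next states when $A$ commits to $m(s_A)$ in $S$. Consequently the two trees of possible state-sequence prefixes coincide level by level, and hence so do their sets of infinite branches.

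Given the auxiliary claim, the case $\catld{A}\bigcirc\phi$ follows directly from the induction hypothesis together with Lemma~\ref{lemma:basic}. For $\catld{A}\Box\phi$ and $\catld{A}\phi\mathcal{U}\phi'$, I would argue both directions by exploiting surjectivity of $m$ (Lemma~\ref{lemma:surj}). For the $S \Rightarrow f(S)$ direction, take a witnessing strategy $p_A$ for $A$ in $S$, pick a preimage $s_A$ with $m(s_A) = p_A$, and observe that the auxiliary claim transports the temporal condition over to $out(s_A,q)$ in $f(S)$, after which the induction hypothesis applies pointwise along those computations to the subformulas $\phi$ and $\phi'$. The converse direction is symmetric: any strategy $s_A$ in $f(S)$ induces $m(s_A)$ in $S$ with the same set of induced computations, so a witness on one side automatically yields a witness on the other.

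The main obstacle is the auxiliary claim itself. While the $\subseteq$ and $\supseteq$ parts of Lemma~\ref{lemma:basic} already show how to match extensions of a partial profile in $S$ to extensions of the corresponding partial action tuple in $f(S)$ at a single state, lifting this to full computations requires a coinductive argument: one must exhibit, for each computation in one structure, a matching computation in the other, built consistently state by state. The construction is not deep, but care is needed so that the correspondence is uniform in the shared strategy (independent of auxiliary choices), so that a single $s_A$ (respectively $m(s_A)$) genuinely witnesses satisfaction of the temporal formula in both $f(S)$ and $S$ simultaneously.
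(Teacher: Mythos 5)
Your overall strategy differs from the paper's: the paper never compares sets of infinite computations at all. It handles $\catld{A}\Box\phi$ and $\catld{A}\phi\,\mathcal{U}\,\phi'$ via the fixed-point equivalences $\catld{A}\Box\phi \leftrightarrow \phi \land \catld{A}\bigcirc\catld{A}\Box\phi$ and $\catld{A}\phi_1\mathcal{U}\phi_2 \leftrightarrow \phi_2 \lor (\phi_1 \land \catld{A}\bigcirc\catld{A}\phi_1\mathcal{U}\phi_2)$, which reduce everything to one-step enforcement, i.e.\ to $force(S,q,A) = force(f(S),q,A)$ (Lemma~\ref{lemma:basic}). Your route instead tries to prove equality of the outcome sets $out(s_A,q)$ in $f(S)$ and $out(m(s_A),q)$ in $S$ and then apply Definition~\ref{def:basicsem} directly. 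That would be a legitimate and arguably more direct argument if the auxiliary claim held, but as stated it does not.

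The gap is in the auxiliary claim and in the step ``the trees of prefixes coincide level by level, hence so do their sets of infinite branches.'' Under the paper's definitions the two $out$ operators are not of the same kind: in an \acro{rcgs}, $out(s_A,q)$ consists only of computations $\lambda_{s,q}$ induced by a \emph{single} complete memoryless strategy $s \geq s_A$, whereas in a \acro{cgs} $out(q,s_A)$ admits any sequence in which \emph{each individual step} is realized by some complete action tuple extending $s_A$, with no consistency required across revisits of the same state. Concretely, take one agent, one role, and a state $q$ where action $1$ loops on $q$ and action $2$ moves to an absorbing state $q'$, with $A=\emptyset$: in $f(S)$ the outcome set contains every $q^k q'^{\omega}$ as well as $q^{\omega}$, while in $S$ it contains only $q^{\omega}$ and $q\,q'^{\omega}$, since a memoryless complete strategy must make the same choice at $q$ every time. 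So the two outcome sets have the same prefix tree but different sets of infinite branches; the \acro{rcgs} outcome set is not determined by its prefix tree, and the coinductive matching you describe cannot produce set equality. The theorem is still true --- for plain \acro{atl} the choice between these two notions of $out$ is known not to affect satisfaction --- but that is precisely an additional fact your proof would have to establish (or circumvent, as the paper does, by working only with $succ$ and $force$ and the fixed-point characterizations). Separately, even for the direction of your claim that is salvageable, you need surjectivity of $m$ \emph{relative to} the extensions of a fixed $s_A$ (every complete strategy of $S$ extending $m(s_A)$ is the image of a complete strategy of $f(S)$ extending $s_A$), which is stronger than Lemma~\ref{lemma:surj} as stated and should be argued explicitly, along the lines of the $\supseteq$ direction in the proof of Lemma~\ref{lemma:basic}.
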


\begin{proof}
  We prove the theorem by showing that for all $\phi$, we have
  $true(S,\phi) = true(f(S),\phi)$. We use induction on complexity of
  $\phi$. The base case for atomic formulas and the inductive steps
  for Boolean connectives are trivial, while the case of
  $\catld{A}\bigcirc\phi$ is a straightforward application of lemma
  \ref{lemma:basic}. For the cases of $\catld{A}\Box\phi$ and
  $\catld{A}\phi\mathcal{U}\psi$ we rely on the following fixed point
  characterizations, which are well-known to hold for \acro{atl}, see
  for instance \cite{Jamroga:2009:EYH:1615285.1615287}, and are also easily
  verified against definition \ref{def:basicsem}:
  \begin{gather}\label{eq:fixed}
    \begin{gathered}
      \catld{A}\Box\phi \leftrightarrow \phi \land
      \catld{A}\bigcirc\catld{A}\Box\phi\\
      \catld{A}\phi_{1}\mathcal{U}\phi_{2} \leftrightarrow
      \phi_{2}\lor
      (\phi_{1}\land\catld{A}\bigcirc\catld{A}\phi_{1}\mathcal{U}\phi_{2}
    \end{gathered}
  \end{gather}
  We show the induction step for $\catld{A}\Box\phi$, taking as
  induction hypothesis $true(S,\phi) = true(f(S),\phi)$.  The first
  equivalence above identifies $Q' = true(S,\catld{A}\Box\phi)$ as the
  maximal subset of $Q$ such that $\phi$ is true at every state in
  $Q'$ and such that $A$ can enforce a state in $Q'$ from every state
  in $Q'$, i.e. such that $\forall q \in Q': \exists Q'' \in
  force(q,A): Q'' \subseteq Q'$.  Notice that a unique such set always
  exists. This is clear since the union of two sets satisfying the two
  requirements will itself satisfy them (possibly the empty set). The
  first requirement, namely that $\phi$ is true at all states in $Q'$,
  holds for $S$ iff if holds for $f(S)$ by induction hypothesis. Lemma
  \ref{lemma:basic} states $force(S,q,A) = force(f(S),q,A)$, and this
  implies that also the second requirement holds in $S$ iff it holds
  in $f(S)$. From this we conclude $true(S,\catld{A}\Box\phi) =
  true(f(S),\catld{A}\Box\phi)$ as desired. The case for
  $\catld{A}\phi\mathcal{U}\psi$ is similar, using the second
  equivalence.\qed
\end{proof}

\section{Model checking and the size of models}
\label{sec:mc}

We have already seen that using roles can lead to a dramatic decrease
in the size of \acro{atl}-models. In this section we give a more
formal account, first by investigating the size of models in terms of
the number of roles, players and actions, then by an analysis of model
checking \acro{atl} over concurrent game structures with roles.

Given a set of numbers $[a]$ and a number $n$, it is a well-known
combinatorial fact that the number of ways in which to choose $n$
elements from $[a]$, allowing repetitions, is $\frac{(n +
  (a-1))!}{n!(a-1)!}$. Furthermore, this number satisfies the
following two inequalities:\footnote{If this is not clear, remember
  that $n^a$ and $a^n$ are the number of functions $[n]^{[a]}$ and
  $[a]^{[n]}$ respectively. It should not be hard to see that all ways
  in which to choose $n$ elements from $a$ induce non-intersecting
  sets of functions of both types}
\begin{equation}\label{eq:basic}
  \begin{array}{ccc}\frac{(n + (a-1))!}{n!(a-1)!} \leq a^n &, & \frac{(n + (a-1))!}{n!(a-1)!} \leq n^a\end{array}
\end{equation}

These two inequalities provide us with an upper bound on the
\emph{size} of \acro{rcgs} models that makes it easy to compare their
sizes to that of \acro{cgs} models.  Typically, the size of concurrent
game structures is dominated by the size of the domain of the
transition function. For an \acro{rcgs} and a given state $q \in Q$
this is the number of complete profiles at $q$. To measure it,
remember that every complete profile is an $|R|$-tuple of votes $v_r$,
one for each role $r \in R$. It follows that $|\votep q|$ is the set
of all possible combinations of votes for each role. Also remember
that a vote $v_r$ for $r \in R$ is an $\act(q,r)$-tuple such that the
sum of entries is $|\roles(q,r)|$. Equivalently, the vote $v_r$ can be
seen as the number of ways in which we can make $|\roles(q,r)|$
choices, allowing repetitions, from a set of $\act(q,r)$
alternatives. Looking at it this way, we obtain:
$$|\votep q| = \prod_{r\in
  R}\frac{(|\roles(q,r)| + (\act(q,r) -
  1))!}{|\roles(q,r)|!(\act(q,r)-1))!}
$$ 
  
We sum over all $q \in Q$ to obtain what we consider to be the size of
an \acro{rcgs} $S$. In light of equation \ref{eq:basic}, it follows
that the size of $S$ is upper bounded by both of the following
expressions.

\begin{equation}\label{eq:size}
  \begin{array}{ccc} \bigO (\sum_{q\in Q}\prod_{r \in R}|\roles(q,r)|^{\act(q,r)}) & , & \bigO (\sum_{q \in Q}\prod_{r\in R}\act(q,r)^{|\roles(q,r)|})
  \end{array}
\end{equation}
We observe that growth in the size of models is polynomial in $a =
max_{q \in Q,r \in R}\act(r,q)$ if $n = \agents$ and $|R|$ is fixed,
and polynomial in $p = max_{q \in Q, r \in R}|\roles(q,r)|$ if $a$ and
$|R|$ are fixed. This identifies a significant potential advantage
arising from introducing roles to the semantics of \acro{atl}. The
size of a \acro{cgs} $M$, when measured in the same way, replacing
complete profiles at $q$ by complete action tuples at $q$, grows
exponentially in the players whenever $d_a(q) > 1$ for each player
$a$. We stress that we are \emph{not} just counting the number of
transitions in our models differently. We do have an additional
parameter, the roles, but this is a genuinely new semantic construct
that gives rise to genuinely different semantic structures. We show
that it is possible to use them to give the semantics of \acro{atl},
but this does not mean that there is not more to be said about
them. Particularly crucial is the question of model checking over
\acro{rcgs} models.

\subsection{Model checking using roles}

For strategic logics, checking satisfiability is usually
non-tractable, and the question of model checking is often crucial in
assessing the usefulness of different logics. For \acro{atl} there is
a well known ``standard'' algorithm, see
e.g. \cite{alur2002alternating}. It does model checking in time linear
in the length of the formula and the size of the model. The algorithm
is based on the fixed point equation \ref{eq:fixed} from the proof of
Theorem \ref{thm:basicthm}, so it will work also when model checking
\acro{rcgs} models. It is not clear, however, how the high level
description should be implemented and, crucially, what the complexity
will be in terms of the new parameters that arise.

Given a structure with roles, $S$, and a formula $\phi$, the standard
model checking algorithm returns the set $true(S,\phi)$, proceeding as
detailed in algorithms \ref{alg:pre} and \ref{alg:force}.

\begin{algorithm}[h]\refstepcounter{theorem}
  \caption{$mcheck(S,\phi)$}
  \label{alg:pre}
  \begin{algorithmic}
    \IF{$\phi = p \in \Pi$} \RETURN $\pi(p)$
    \ENDIF
    \IF {$\phi = \neg \psi$} \RETURN $Q \setminus mcheck(S,\psi)$
    \ENDIF
    \IF {$\phi = \psi \wedge \psi'$} \RETURN $mcheck(S,\psi) \cap
    mcheck(S,\psi')$
    \ENDIF
    \IF {$\phi = \catld{A}\bigcirc\psi$} \RETURN $\{q \mid
    enforce(S,q,A, mcheck(S,\psi))\}$
    \ENDIF
    \IF {$\phi = \catld{A}\Box\psi$} \STATE $Q_1 := Q$, $Q_2 :=
    mcheck(S,\psi)$ \WHILE {$Q_1 \not \subseteq Q_2$} \STATE $Q_1 :=
    Q_2$, $Q_2 := \{q \in Q \mid enforce(S,A,q,Q_2)\} \cap Q_2$
    \RETURN $Q_1$
    \ENDWHILE
    \ENDIF
    \IF {$\phi = \catld{A}\psi\mathcal{U}\psi'$} \STATE $Q_1 :=
    \emptyset$, $Q_2 = mcheck(S,\psi)$, $Q_3 = mcheck(S,\psi')$ \WHILE
    {$Q_3 \not \subseteq Q_1$} \STATE $Q_1 := Q_1 \cup Q_3$, $Q_3 :=
    \{q \in Q \mid enforce(S,A,q,Q_1)\} \cap Q_2$ \RETURN $Q_3$
    \ENDWHILE
    \ENDIF
  \end{algorithmic}
\end{algorithm}

Given a structure $S$, a coalition $A$, a state $q \in Q$ and a set of
states $Q'$, the method $enforce$ answers true or false depending on
whether or not $A$ can enforce $Q'$ from $q$. That is, it tells us if
at $q$ there is $Q'' \in force(q,A)$ such that $Q'' \subseteq
Q'$. Given a fixed length formula and a fixed number of states, this
step dominates the running time of $mcheck$ (algorithm
\ref{alg:pre}). It is also the only part of the standard algorithm
that behaves in a different way after addition of roles to the
structures. It involves the following steps:\footnote{In
  implementations one would seek to take advantage of information
  collected by repeating calls to $enforce$ and not just do a Boolean
  check for every new instance in the way we do it here.  This aspect
  is not crucial for our analysis, so we do not address it further}

\begin{algorithm}[h]\refstepcounter{theorem}
  \caption{$enforce(S,A,q,Q')$}
  \label{alg:force}
  \begin{algorithmic}
    \FOR {$F \in \kvotep q A$} \STATE $p = true$ \FOR {$F' \in
      ext(q,F)$} \IF {$\delta(q,F') \not \in Q'$} \STATE $p = false$
    \ENDIF
    \ENDFOR
    \IF {$p = true$} \RETURN $true$
    \ENDIF
    \ENDFOR
    \RETURN $false$
  \end{algorithmic}
\end{algorithm}

For all profiles $F \in \kvotep q A$ the algorithm runs through all
complete profiles $F' \in \votep q$ that extend $F$. Over \acro{cgs}s,
given a coalition $A$ and two action tuples $t =
\<\alpha_{a_1},\alpha_{a_2},\ldots,\alpha_{a_{|A|}}\>, t' =
\<\alpha'_{a_1},\alpha'_{a_2},\ldots,\alpha'_{a_{|A|}}\>$ for $A$ at
$q$, the sets of complete action tuples that extend $t$ and $t'$
respectively do not intersect. It follows that running through all
such extensions for all possible action tuples for $A$ at $q$ is at
most linear in the total number of complete action tuples at $q$. This
is no longer the case for \acro{rcgs} models. Given two profiles
$P,P'$ for $A$ at $q$, there can be many shared extensions. In fact,
$P$ and $P'$ can share exponentially many in terms of the number of
players and actions available.\footnote{ To see this, consider $P =
  \<v_1,v_2\ldots,v_{|R|}\>$ and $P' =
  \<v'_1,v'_2,\ldots,v'_{|R|}\>$. Each $v_r,v'_r \in \kvote A q r$
  sums to $\Sigma_{1 \leq j \leq \act(q,r)} v_i(j) = |A_{q,r}|$. Then
  form a complete profile $P'' = \<v''_1,v''_2,\ldots,v''_{|R|}\>$ at
  $q$ such that for all $1 \leq r \leq |R|$ and all $1 \leq j \leq
  \act(q,r)$ we have $v''_r(j) = max(v_r(j),v'_r(j))$. Then, if it
  exists, choose a coalition $A'$ such that $|A'_{r,q}| = \Sigma_{1
    \leq j \leq \act(q,r)}v''_r(j)$. It is clear that the number of
  complete profiles that extends both $v$ and $v'$ is equal to the
  number of \emph{all} $\agents \setminus A'$-profiles at $q$.} So, in
general, running $enforce$ requires us to make several passes through
the set of all complete profiles, and the complexity is no longer
linear. Still, it is polynomial in the number of complete profiles,
since for any coalition $A$ and state $q$ we have $|\kvotep q A| \leq
|\votep q|$, meaning that the complexity of $enforce$ is upper bounded
by $|\votep q|^2$. It follows that model checking of \acro{atl} over
concurrent game structures with roles is polynomial in the size of the
model. We summarize this result.

\begin{proposition}\label{prop:modelcheck}

  Given a \acro{cgs} $S$ and a formula $\phi$, $mcheck(S,\phi)$ takes
  time $\bigO(l e^2)$ where $l$ is the length of $\phi$ and $e =
  \sum\limits_{q \in Q}\votep q$ is the total number of transitions in
  $S$
\end{proposition}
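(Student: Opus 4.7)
The plan is to reduce the complexity analysis to a tight bound on the subroutine \emph{enforce} and then trace how often this subroutine is invoked as \emph{mcheck} traverses the subformulas of $\phi$. I would proceed by structural induction on $\phi$, observing that \emph{mcheck} processes at most $l$ distinct subformulas (one per case of the recursive definition) and that within each case the work splits cleanly: the atomic, negation, and conjunction cases perform only set operations on $\wp(Q)$ and cost $\mathcal O(|Q|)$; the $\catld{A}\bigcirc\psi$ case issues one call to \emph{enforce} per state; and the $\catld{A}\Box\psi$ and $\catld{A}\psi\mathcal{U}\psi'$ cases execute a fixed-point iteration, each round of which again calls \emph{enforce} once per state. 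So all the genuinely new complexity hides inside \emph{enforce}, and the outer analysis is structurally identical to the standard \acro{atl} argument.

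First I would bound \emph{enforce}$(S,A,q,Q')$ directly from Algorithm~\ref{alg:force}. Its outer loop iterates over $\kvotep q A$ and its inner loop over $ext(q,F)$. Although in the \acro{cgs} case the inner loops over different outer profiles partition the complete action tuples and yield a linear bound, the paper has already noted that in an \acro{rcgs} two distinct partial profiles $P,P'$ may share extensions, so that argument is unavailable. Instead I would use the crude but sufficient bound $|\kvotep q A|\le|\votep q|$ (since any $A$-profile at $q$ can be completed to at least one element of $\votep q$) together with the fact that $|ext(q,F)|\le|\votep q|$, giving an upper bound of $|\votep q|^2$ arithmetic operations per call. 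Summed over a single sweep of states, the total work is $\sum_{q\in Q}|\votep q|^2$, which is bounded by $(\sum_{q\in Q}|\votep q|)^2 = e^2$.

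Finally I would combine these estimates by counting the number of sweeps. For the next-time case each subformula triggers exactly one sweep, costing $\mathcal O(e^2)$. For $\catld{A}\Box\psi$ and $\catld{A}\psi\mathcal U\psi'$ the sets $Q_1,Q_3$ are monotone under the loop body (standard consequence of the fixed-point characterizations~\eqref{eq:fixed}), so the loop terminates after at most $|Q|\le e$ iterations; with a straightforward bookkeeping refinement (recomputing \emph{enforce} only at states whose successor-sets have changed), the amortised work across all iterations of the fixed point is again $\mathcal O(e^2)$ rather than $\mathcal O(|Q|\cdot e^2)$. Multiplying by the $l$ subformulas yields the stated bound $\mathcal O(l\,e^2)$. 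The one delicate point in the argument, and the place where the \acro{rcgs} setting genuinely departs from the classical \acro{cgs} case, is the loss of the partitioning property of $ext$ — the quadratic blow-up in \emph{enforce} is unavoidable precisely because partial profiles overlap, and the whole proposition hinges on showing that this blow-up remains polynomial rather than exponential.
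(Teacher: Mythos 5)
Your proposal is correct and follows essentially the same route as the paper: the heart of the argument in both cases is that \emph{enforce} at a state $q$ costs at most $|\kvotep q A|\cdot|\votep q|\le|\votep q|^2$ because, unlike in a \acro{cgs}, the extension sets of distinct partial profiles overlap and so the linear partitioning argument fails, after which one multiplies through by the $l$ subformulas. Your extra care about the number of fixed-point iterations (and the amortised bookkeeping needed to avoid an additional $|Q|$ factor) addresses a point the paper itself only defers to a footnote, but it does not change the overall structure of the argument.
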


Since model checking \acro{atl} over \acro{cgs}s takes only linear
time, $\bigO (l e)$, adding roles apparently makes model checking
harder. On the other hand, the \emph{size} of \acro{cgs} models can be
bigger by an exponential factor, making model checking much easier
after adding roles. In light of the bounds we have on the size of
models, c.f. equation \ref{eq:size}, we find that as long as the roles
and the actions remain fixed, complexity of model checking is only
polynomial in the number of agents. This is a potentially significant
argument in favor of roles.

In practice, however, finding an optimal \acro{rcgs} for a given
\acro{cgs} model $M$ might be at least as difficult as model checking
on $M$ directly. It involves identifying the structure from $f^-(M)$
that has the minimum number of roles. In general, one cannot expect
this task to have sub-linear complexity in the size of
$M$.\footnote{Although in many practical cases, when models are given
  in some compressed form, the situation might be such that it is
  possible. The question of how to efficiently find small
  \acro{rcgs}-models will be investigated in future work.} Roles
should be used at the modelling stage, as they give the modeller an
opportunity for exploiting homogeneity in the system under
consideration. We think that it is reasonable to hypothesize that in
practice, most large scale systems that lends themselves well to
modelling by \acro{atl} do so precisely because they exhibit
significant homogeneity. If not, identifying an accurate \acro{atl}
model of the system, and model checking it, seems unlikely to be
tractable at all.

The question arises as to whether or not using an \acro{rcgs} is
\emph{always} the best choice, or if there are situations when the
losses incurred in the complexity of model checking outweigh the gains
we make in terms of the size of models. A general investigation of
this in terms of how fixing or bounding the number of roles affect
membership in complexity classes is left for future work. Here, we
conclude with the following proposition which states that as long we
use the standard algorithm, model checking any \acro{cgs} $M$ can be
done at least as quickly by model checking an \emph{arbitrary} $S \in
f^-(M)$.

\begin{proposition}\label{prop:alwaysbest}
  Given any \acro{cgs}-model $M$ and any formula $\phi$, let
  $c(mcheck(M,\phi))$ denote the complexity of running
  $mcheck(M,\phi)$. We have, for all $S \in f^-(M)$, that complexity
  of running $mcheck(S,\phi)$ is $\bigO (c(mcheck(M,\phi))$
\end{proposition}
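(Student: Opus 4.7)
The plan is to exploit the fact that the algorithm $mcheck$ has the same control structure regardless of whether the input is an \acro{rcgs} $S$ or the \acro{cgs} $f(S) = M$: the base cases, Boolean combinators, and fixed-point loops for the two temporal modalities all iterate over the shared state space $Q$, are driven by the same valuation $\pi$, and contribute identical cost. The only subroutine whose cost can differ is $enforce$, so it suffices to prove that, at any state $q \in Q$ and coalition $A \subseteq \agents$, one call to $enforce$ in $S$ takes at most a constant times the cost of the corresponding call in $M$.

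For the \acro{cgs} $M$, the cost of $enforce$ at $q$ is $\Theta(T_q)$ where $T_q = \prod_{a \in \agents} d_a(q)$: the double loop visits every pair $(t, t')$ with $t$ an $A$-action tuple at $q$ and $t'$ a complete extension of $t$, and each complete action tuple extends exactly one $A$-action tuple. For $S$, the analogous double loop visits the pairs $(F, F')$ with $F \in \kvotep q A$ and $F' \in ext(q, F)$, whose number I denote $N_q$. I will bound $N_q \leq T_q$ by exhibiting a surjection $g$ from complete action tuples of $M$ at $q$ onto these pairs, defined by $g(\alpha) = (m(\alpha|_A), m(\alpha))$ with $m$ the map of Lemma~\ref{lemma:surj}. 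The image is always a valid pair since $m(\alpha|_A) \leq m(\alpha)$ component-wise.

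The key step, and the only one with real content, is surjectivity of $g$: it reuses the construction from the proof of Lemma~\ref{lemma:surj} applied twice in parallel. Given any target pair $(F, F')$ with $F \leq F'$, the difference $F' - F$ is a well-defined $(\agents \setminus A)$-profile, since its $r$-th block has non-negative entries summing to $|\roles(q,r)| - |A_{r,q}| = |(\agents \setminus A)_{r,q}|$. Apply the construction of Lemma~\ref{lemma:surj} separately to $A$ with target $F$ and to $\agents \setminus A$ with target $F' - F$, yielding partial action assignments $\alpha_A$ and $\alpha_{\agents \setminus A}$ with $m(\alpha_A) = F$ and $m(\alpha_{\agents \setminus A}) = F' - F$. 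Their concatenation $\alpha$ then satisfies $m(\alpha|_A) = F$ and, by additivity of counts across the disjoint union $A \cup (\agents \setminus A) = \agents$, also $m(\alpha) = (F' - F) + F = F'$, so $g(\alpha) = (F, F')$.

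Hence $N_q \leq T_q$ at every state $q$, and so every call to $enforce$ in $S$ is bounded in cost by the corresponding call in $M$. Since every other cost contribution in $mcheck$ is identical across the two runs, summing over states, fixed-point iterations, and sub-formulas yields $c(mcheck(S, \phi)) = \mathcal{O}(c(mcheck(M, \phi)))$. The main technical obstacle will be the surjectivity argument above; everything else amounts to a routine accounting that the outer control flow of algorithms~\ref{alg:pre} and~\ref{alg:force} contributes equally to the cost in the two runs.
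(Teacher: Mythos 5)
Your proposal is correct, and its overall decomposition is exactly the paper's: observe that the two runs of $mcheck$ differ only in the cost of $enforce$, and then bound the number of pairs (partial profile, complete extension) inspected at a state $q$ of $S$ by the number of complete action tuples $\prod_{a \in \agents} d_a(q)$ inspected at $q$ of $M = f(S)$. Where you diverge is in how that central inequality is established. The paper writes both quantities in closed form --- the number of pairs is $\prod_{r}\binom{|A_{r,q}| + \act(q,r)-1}{\act(q,r)-1}\cdot\prod_{r}\binom{|\roles(q,r)|-|A_{r,q}| + \act(q,r)-1}{\act(q,r)-1}$, i.e.\ ($A$-profiles) $\times$ ($(\agents\setminus A)$-profiles) --- and then applies the multiset-coefficient inequality of equation~\ref{eq:basic} termwise to land on $\prod_{a\in\agents}d_a(q)$. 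You instead exhibit an explicit surjection $\alpha \mapsto (m(\alpha|_A), m(\alpha))$ from complete action tuples onto such pairs, proving surjectivity by running the construction of Lemma~\ref{lemma:surj} separately on $A$ (target $F$) and on $\agents\setminus A$ (target $F'-F$) and concatenating; this is sound, since $F'-F$ is indeed an $(\agents\setminus A)$-profile and counts add across the disjoint union. Your route is a little more self-contained and conceptually transparent (it reuses the machinery already built for the equivalence proof and makes clear \emph{why} the role-based count cannot exceed the tuple count), while the paper's is shorter given that equation~\ref{eq:basic} is already on the table. Both correctly handle the point that, unlike in a \acro{cgs}, distinct $A$-profiles can share extensions, which is why one must count pairs rather than complete profiles.
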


\begin{proof}
  It is clear that for any $S \in f^-(M)$, running $mcheck(S,\phi)$
  and $mcheck(M,\phi)$, a difference in overall complexity can arise
  only from a difference in the complexity of $enforce$. So we compare
  the complexity of $enforce(S,A,q,Q'')$ and $enforce(M,A,q,Q'')$ for
  some arbitrary $q \in Q$, $Q'' \subseteq Q$. The complexity in both
  cases involves passing through all complete extensions of all
  strategies for $A$ at $q$. The sizes of these sets are can be
  compared as follows, the first inequality is an instance of equation
  \ref{eq:basic} and the equalities follow from definition of $f$ and
  the fact that $M = f(S)$.
  \begin{align*}
    \prod_{r \in R} \left(\frac{(|A_{r,q}| + (\act(r,q)- 1))!}{|A_{r,q}|!(\act(r,q)-1)!}\right) \times \prod_{r \in R}\left(\frac{((|\roles(r,q)| - |A_{r,q}|) + (\act(r,q) - 1))!}{(|\roles(r,q)| - |A_{r,q}|)!(\act(r,q)-1)!}\right) & \\ \leq \left(\prod_{r \in R} \act(r,q)^{|A_{r,q}|} \times \prod_{r \in R} \act(r,q)^{|\roles(r,q)| - |A_{r,q}|}\right) \\
    = \prod_{r \in R} \left(\prod_{a \in A_{r,q}} \act(r,q)\right) \times \prod_{r\in R} \left(\prod_{a \in \roles(a,r) \setminus A_{r,q}} \act(r,q)\right) \\
    = \left(\prod_{a \in A}d_a(q) \times \prod_{a \in \agents
        \setminus
        A}d_a(q)\right) = \prod_{a \in \agents}d_a(q)\\
  \end{align*}

  We started with the number of profiles (transitions) we need to
  inspect when running $enforce$ on $S$ at $q$, and ended with the
  number of action tuples (transitions) we need to inspect when
  running $enforce$ on $M = f(S)$. Since we showed the first to be
  smaller or equal to the latter and the execution of all other
  elements of $mcheck$ are identical between $S$ and $M$, the claim
  follows.
\end{proof}

\section{Conclusions, related and future work}
\label{sec:concl}

In this paper we have described a~new type of semantics for the
strategic logic \acro{atl}. We have provided motivational examples and
argued that although in principle model checking \acro{atl}
interpreted over concurrent game structures with roles is harder than
the standard approach, it is still polynomial and generates
exponentially smaller models. We believe this provides conclusive
evidence that concurrent game structures with roles are an interesting
semantics for \acro{atl}, and should be investigated further.

Relating our work to ideas already present in the literature we find
it somewhat similar to the concept of exploiting symmetry in model
checking, as investigated by Sistla and Godefroid
\cite{SisGod0407-0}. Our approach is however different, since we we
only look at agent symmetries in \acro{atl}. When it comes to work
related directly to strategic logics, we find no similar ideas
present, hence concluding that our approach is indeed novel.

For future work we plan on investigating the homogeneous aspect of our
`roles' in more depth. We are currently working on a~derivative of
\acro{atl} with a~different language that will fully exploit the role
based semantics.

\paragraph{Acknowledgements} We thank anonymous reviewers of LAMAS
2012 and Pål Grønås Drange for helpful comments.

\bibliographystyle{plain} \bibliography{references.bib}

\end{document}